\providecommand\given{} 
\DeclarePairedDelimiterX\PH[1][]{
   \renewcommand\given{\nonscript\:\delimsize\vert\nonscript\:}
   #1
}
\newcommand\Prob{\operatorname{Pr}\PH}
\DeclareRobustCommand
\renewcommand{\dotsc}{\Compactdotsc}
\tikzset{
	loop right/.append style={
		every loop/.append style={
			out=30, in=-30, looseness=4,
			shorten >= 0pt
		}
	},
	loop above/.append style={
		every loop/.append style={
			out=120, in=60, looseness=4,
			shorten >= 0pt
		}
	},
	loop left/.append style={
		every loop/.append style={
			out=210, in=150, looseness=4,
			shorten >= 0pt
		}
	},
	loop below/.append style={
		every loop/.append style={
			out=300, in=240, looseness=4,
			shorten >= 0pt
		}
	},
	every edge/.append style={
		every node/.append style={
			execute at begin node=$,
			execute at end node=$
		}
	},
	every state/.append style={
		execute at begin node=$,
		execute at end node=$
	},
	>=stealth,
	node distance = 1.65cm and 4cm,
	on grid,
	auto,
	initial text = 
}
\newcommand\mydefn[1]{\textcolor{red!60!black}{\itshape #1}}
\newcommand{\CC}[1]{\textsf{#1}}
\newcommand{\eL}{\mathcal{L}}
\renewcommand{\L}[1]{\eL\left(#1\right)}
\newcommand{\NL}{\CC{NL}}
\newcommand{\rNL}{\CC{rNL}}
\newcommand{\DFA}{\textsf{DFA}}
\newcommand{\onfa}{\textsf{1nfa}}
\newcommand{\odfa}{\textsf{1dfa}}
\newcommand{\tonfa}[1]{\textsf{\onfa\ensuremath{\left(#1\right)}}}
\newcommand{\todfa}[1]{\textsf{\odfa\ensuremath{\left(#1\right)}}}
\newcommand{\nfa}{\textsf{2nfa}}
\newcommand{\dfa}{\textsf{2dfa}}
\newcommand{\tnfa}[1]{\textsf{\nfa\ensuremath{\left(#1\right)}}}
\newcommand{\tdfa}[1]{\textsf{\dfa\ensuremath{\left(#1\right)}}}
\newcommand{\pfa}{\textsf{2pfa}}
\newcommand{\tpfa}[1]{\textsf{\pfa\ensuremath{\left(#1\right)}}}
\newcommand{\owips}[1]{\textsf{IP\ensuremath{\left(#1\right)}}}
\newcommand{\specs}{\textsf{restriction-list}}
\newcommand{\constr}{\textsf{constant-randomness}}
\DeclareMathOperator{\SUM}{sum}
\DeclarePairedDelimiter{\ceil}{\lceil}{\rceil}
\newcommand{\No}{\mathbb{N}_1}
\renewcommand{\P}{\mathcal{P}}
\newcommand{\Accept}{\textit{Accept}}
\newcommand{\Reject}{\textit{Reject}}
\newcommand{\reject}{\textit{reject}}
\newcommand{\textiff}{\textit{iff}}
\newcommand{\lend}{\ensuremath{\triangleright}}
\newcommand{\rend}{\ensuremath{\triangleleft}}
\newcommand{\trim}[1]{#1^\times}
\newcommand{\err}{\varepsilon}
\newcommand{\faerr}{\err_{\textrm{false-accept}}}
\newcommand{\failaerr}{\err_{\textrm{fail-to-accept}}}
\newcommand{\failrerr}{\err_{\textrm{fail-to-reject}}}
\newcommand{\weaklabel}{\textrm{weak}}
\newcommand{\stronglabel}{\textrm{strong}}
\newcommand{\werr}{\err_{\weaklabel}}
\newcommand{\serr}{\err_{\stronglabel}}
\newcommand{\Lw}[2]{\eL_{\weaklabel, #1}\left(#2\right)}
\newcommand{\Ls}[2]{\eL_{\stronglabel, #1}\left(#2\right)}
\newcommand{\windablelabel}{\textrm{W}}
\newcommand{\reliablelabel}{\textrm{R}}
\newcommand{\kw}{k_{\windablelabel}}
\newcommand{\kr}{k_{\reliablelabel}}
\newcommand\overallp{P}
\newcommand\opr{\overallp_{\reliablelabel}}
\newcommand\opw{\overallp_{\windablelabel}}
\newcommand\individualp{p}
\newcommand\pr{\individualp_{\reliablelabel}}
\newcommand\pw{\individualp_{\windablelabel}}
\newcommand\Propt{\opr^*}
\newcommand\FR{F}
\newcommand\FRr{\FR_{\reliablelabel}}
\newcommand\FRw{\FR_{\windablelabel}}
\newcommand{\ta}{\mathtt{a}}
\newcommand{\tb}{\mathtt{b}}
\newcommand{\tc}{\mathtt{c}}
\newcommand{\td}{\mathtt{d}}
\newcommand{\tsh}{\mathtt{\#}}
\newcommand{\ext}{\mathop{\vcenter{\hbox{$\scriptstyle\!\circ$}}}}
\newcommand{\wo}[2]{#1_{\setminus #2}}
\DeclarePairedDelimiter\br{\lbrack}{\rbrack}
\newcommand{\BR}[1]{\br[\big]{#1}}
\setlist{parsep=0.1em, itemsep=0.3em}
\newcommand\POS\rho
\newcommand\D{t}
\newcommand\C{m}
\newcommand\B{b}
\newcommand\openrange[2]{\left(#1, #2 \right)}
\title{
	Windable Heads \& \\
	Recognizing \NL{} with Constant Randomness
}
\author{Mehmet Utkan Gezer\orcidID{0000-0002-5022-178X}}
\authorrunning{M. U. Gezer}
\institute{Boğaziçi University, Bebek İstanbul, Türkiye\\
\email{utkan.gezer@boun.edu.tr}}
\begin{document}

\maketitle

\begin{abstract}
	Every language in \NL{} has a $k$-head two-way
	nondeterministic finite automaton (\tnfa{k}) recognizing it.
	It is known how to build a constant-space verifier algorithm
	from a \tnfa{k} for the same language with constant-randomness,
	but with error probability $\sfrac{k^2-1}{2k^2}$
	that can not be reduced further by repetition.
	We have defined the unpleasant characteristic
	of the heads that causes the high error as
	the property of being ``windable''.
	With a tweak on the previous verification algorithm,
	the error is improved to $\sfrac{\kw^2-1}{2\kw^2}$,
	where $\kw \le k$ is the number of windable heads.
	Using this new algorithm,
	a subset of languages in \NL{} that have
	a \tnfa{k} recognizer
	with $\kw \le 1$ can be verified
	with arbitrarily reducible error using
	constant space and randomness. 

	\keywords{Interactive Proof Systems  \and Multi-head finite automata \and Probabilistic
	finite automata.}
\end{abstract}

\section{Introduction}

Probabilistic Turing machines (PTM) are
otherwise deterministic Turing machines
with randomness as a resource.
They can be standalone recognizers of languages,
or be verifiers for the proofs of memberships.
In either scenario,
a measurable error is incorporated into their decision
due to randomness involved in their execution.
This error can usually be reduced
via repeated execution in the PTM's control.

The language class verifiable by the
constant-randomness two-way probabilistic
finite automata (\pfa) is the class \NL{}.
Curiously, however, the error of these verifiers
in recognizing languages of this class
seems to be irreducible beyond a certain threshold~\cite{say_finite_2014}.

In this paper,
we introduce a characteristic for the languages in \NL{}.
Based on this characteristic, we lower the error threshold
established in~\cite{say_finite_2014}
for almost all languages in \NL{}.
Finally, we delineate a subset of \NL{}
which are verifiable by the constant-randomness \pfa{}
with arbitrarily low error.

The remaining of the paper is structured as follows:
\Cref{sec:kfa,sec:ips} provides the necessary background
as well as our terminology in the domain.
A key property of the multi-head finite automata
is identified in \cref{sec:wind}.
The characterization of languages in \NL{} and
our algorithm for verification achieving aforementioned results
are described in \cref{sec:alg}.

Following notation will be common throughout this paper:
\begin{itemize}
	\item
		$\L{M}$ denotes the language recognized by the machine $M$.
	\item
		$\L{\CC{X}} = \Set{ \L{M} | M \in \CC{X} }$
		for a class of machines $\CC{X}$.
	\item
		$\wo{S}{q}$ denotes the set $S$ without its element $q$.
	\item
		$\sigma_i$ denotes the $i$th element of the sequence $\sigma$.
	\item
		$\trim{w}$ denotes the substring of $w$ without its last character.
	\item
		$\sigma\ext\tau$ denotes the sequence $\sigma$
		concatenated with the element or sequence $\tau$.
\end{itemize}

\section{Finite automata with $k$ heads}\label{sec:kfa}

Finite automata are the Turing machines
with read-only tape heads on a single tape.
A finite automata with only one head is
equivalent to a \DFA{} (deterministic finite automaton)
in terms of language recognition~\cite{holzer_complexity_2011},
hence recognizes a regular language.
Finite automata with $k > 1$ heads
can recognize more than just regular languages.
Their formal definition may be given as follows:

\begin{definition}[Multi-head nondeterministic finite automata]\label{def:tnfa}
A \tnfa{k} is a 5-tuple,
$M = (Q, \Sigma, \delta, q_0, q_f)$, where;
\begin{enumerate}
	\item $Q$ is the finite set of states,
	\item $\Sigma$ is the finite set of input symbols,
		\begin{enumerate}
			\item 
				$\lend, \rend$ are the left and right
				end-markers for the input on the tape,
			\item
				$\Gamma = \Sigma \cup \Set{ \lend, \rend }$
				is the tape alphabet,
		\end{enumerate}
	\item
		$\delta \colon Q \times \Gamma^{k} \to
		\P(\wo{Q}{q_0} \times \Delta^{k})$
		is the transition function, where;
		\begin{enumerate}
			\item
				$\Delta = \Set{ -1, 0, 1}$
				is the set of head movements,
		\end{enumerate}
	\item $q_0 \in Q$ is the unique initial state,
	\item $q_f \in Q$ is the unique accepting state.
\end{enumerate}
\end{definition}

Machine $M$ is said to execute on a string $w \in \Sigma^*$,
when $\lend w \rend$ is written onto $M$'s tape,
all of its heads rewound to the cell with $\lend$,
its state is reset to $q_0$,
and then it executes in steps by the rules of $\delta$.
At each step,
inputs to $\delta$ are the state of $M$
and the symbols read by respective heads of $M$.

When $\abs{\delta} = 1$ with the only member
$(q', (d_1, \dotsc, d_k)) \in \wo{Q}{q_0} \times \Delta^k$,
the next state of $M$ becomes $q'$,
and $M$ moves its $i$th head by $d_i$.
Whenever $\abs{\delta} > 1$, the execution branches,
and each branch runs in parallel.
A branch is said to reject $w$, if $\abs{\delta} = 0$,
or if all of its branches reject.
A branch accepts $w$, if its state is at $q_f$,
or if any one of its branches accepts.
A branch may also do neither, in which case
the branch is said to loop.

A string $w$ is in $\L{M}$, if the root of $M$'s
execution on $w$ is an accepting branch.
Otherwise, $w \notin \L{M}$,
and the root of $M$'s execution is either a rejecting or
a looping branch.

Restricting $\delta$ to not have transitions inbound to $q_0$
does not detriment the language recognition of a \tnfa{k}
in terms of its language recognition:
Any \tnfa{k} with such transitions can be converted into one without,
by adding a new initial state $q'_0$
and setting $\delta(q'_0, \lend, \dotsc, \lend) = \Set{ (q_0, 0, \dotsc, 0) }$.

\begin{lemma}
	The containment
	$\L{\tnfa{k}} \subsetneq \L{\tnfa{k+1}}$
	is proper~\cite{monien_transformational_1976,monien_two-way_1980,sudborough_remarks_1977,seiferas_relating_1977,ibarra_two-way_1973}.
\end{lemma}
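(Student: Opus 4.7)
To establish proper containment, I would exhibit, for each $k \ge 1$, a witness language $L_k \in \L{\tnfa{k+1}} \setminus \L{\tnfa{k}}$. A natural family, used in variants by the cited authors, is the multi-block equality language
\[
L_k = \Set{ w_1 \# w_2 \# \dotsc \# w_{k+1} \# w_1 \# w_2 \# \dotsc \# w_{k+1} | w_i \in \Set{\ta, \tb}^* },
\]
which intuitively requires $k+1$ simultaneous alignments between the two halves of the input.

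The upper bound $L_k \in \L{\tnfa{k+1}}$ is by direct construction. After the syntactic form of the input is verified (correct number of delimiters, blocks from the right alphabet), the $(k+1)$-head machine parks the $i$-th head at the start of the $i$-th left-half block $w_i$. It then sweeps through the right half block by block, advancing the $i$-th head in lockstep with the scan of the $i$-th right-half block and rejecting on any character mismatch. Delimiter bookkeeping, block indexing, and termination fit inside the finite control, and the root branch accepts after the final comparison succeeds.

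The substantive direction, and the locus of difficulty, is the lower bound $L_k \notin \L{\tnfa{k}}$. My plan would follow the information-transport arguments of Sudborough and Seiferas. Fix long, independently chosen blocks $w_1, \dotsc, w_{k+1}$ of equal length. Any accepting computation of a hypothetical $k$-head recognizer induces, as a function of time, a $k$-tuple of head-location types drawn from a finite alphabet. A pigeonhole / counting argument over the input length forces the existence of some pair of matched blocks that are never jointly visited, in the sense that no two heads are ever simultaneously positioned on the two copies of some $w_i$ in a way that would permit character-by-character comparison. A cut-and-paste argument then replaces this unattended pair with a modified pair violating membership in $L_k$, without disturbing the machine's observable behaviour on any other region, producing an accepting run on a non-member and a contradiction.

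The main obstacle is making this ``swap'' rigorous in the nondeterministic two-way model, since heads may revisit cells and branches may correlate across time. One must fix a single accepting branch, extract from it a consistent profile of joint head positions via a Ramsey-type pigeonhole on the finite configuration alphabet, and then verify that the adversarial modification preserves the computation on that branch. The cited hierarchy theorems carry out this analysis in full; I would appeal directly to their construction rather than reconstruct it here.
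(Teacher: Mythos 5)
The paper offers no proof of this lemma; it is stated as a known hierarchy theorem and delegated entirely to the cited references, so any comparison is between your sketch and the literature's actual arguments.

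There is a genuine gap in your proposal: the witness family you choose does not separate the \emph{two-way} hierarchy, which is what \tnfa{k} denotes here. For two-way heads, your language $L_k$ is recognizable with just \emph{two} heads for every $k$: park one head at the start of $w_i$ in the left half and the other at the start of the corresponding block in the right half, compare character by character, then (using two-way motion) advance both heads to the next pair of blocks and repeat. Sequential reuse of heads is exactly what two-way motion buys, so no number of blocks forces extra heads, and your upper bound of $k+1$ heads, while correct, is far from tight --- the language sits at the bottom of the hierarchy. Consequently the lower bound you aim for is false as stated, and the cut-and-paste argument cannot be repaired for this language. Your sketch is essentially the Yao--Rivest argument, which is sound for \emph{one-way} multihead automata (with $\binom{k}{2}+1$ blocks rather than $k+1$), where a head that passes a block can never return to it and the ``never jointly visited'' pigeonhole genuinely isolates a swappable pair.

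The two-way hierarchy established in the cited works (in particular Monien) rests on a different mechanism: two-way $k$-head automata are related to space-bounded Turing machines (roughly, $k$ heads on an input of length $n$ correspond to $n^k$ configurations), and the separation is obtained over a unary alphabet via translational/padding arguments leaning on space-hierarchy-style diagonalization, not via a block-matching witness and cut-and-paste. If you want a self-contained proof you would need to reconstruct that machinery; appealing to the references, as the paper does, is the reasonable course, but the concrete plan you outline would not succeed if carried out.
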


\begin{lemma}\label{lem:timelimit}
	There is a way to obtain a \tnfa{2k}
	that is guaranteed to halt,
	from any given \tnfa{k}.
\end{lemma}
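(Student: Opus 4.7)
The plan is a standard pigeonhole-plus-counter construction. First I would bound the number of distinct configurations of a \tnfa{k} $M = (Q, \Sigma, \delta, q_0, q_f)$ on an input of length $n$: there are at most $\abs{Q} \cdot (n+2)^k$ of them, since a configuration is determined by the current state together with the positions of the $k$ heads on the $(n+2)$-cell tape $\lend w \rend$. Hence any computation branch that executes strictly more than this many steps must revisit a configuration, and from that point onward either loops forever or could have been shortcut; in particular, if an accepting configuration is reachable along the branch at all, it is reachable within $\abs{Q}\cdot(n+2)^k$ steps. So the construction only needs to forcibly \reject{} once this bound is exceeded, and no accepting computation is lost.

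Next I would build the \tnfa{2k} $M'$ by taking $M$'s $k$ original heads verbatim and adding $k$ fresh \emph{counter heads} $h_1, \dotsc, h_k$, each ranging over cells $0, \dotsc, n+1$ of the very same tape. I would interpret the tuple $(\text{pos}(h_1), \dotsc, \text{pos}(h_k))$ as a base-$(n+2)$ odometer of capacity $(n+2)^k$, incremented by moving $h_1$ right one cell; whenever a head $h_i$ tries to step past $\rend$, $M'$ instead sweeps $h_i$ back to $\lend$ and attempts to increment $h_{i+1}$, in the usual carry propagation. To multiply the counter's reach by the constant factor $\abs{Q}$, I would wrap the odometer in an outer mod-$\abs{Q}$ loop kept in the finite state, so the total number of simulated steps before overflow is exactly $\abs{Q}\cdot(n+2)^k$. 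All carry propagation, mode-switching between ``simulate $M$'' and ``increment counter'', and the outer loop counter sit in the finite control, so the state set remains finite.

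Combining the pieces: at each step $M'$ first performs one step of $M$'s transition on the original $k$ heads, then performs one odometer increment on the counter heads; if the increment signals overflow, $M'$ enters a rejecting dead state on that branch. Accepting transitions of $M$ are inherited unchanged, so $\L{M'} = \L{M}$. Every branch now terminates after at most $\abs{Q}\cdot(n+2)^k$ simulated steps, giving a halting \tnfa{2k}.

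The only real subtlety, and the step I would be most careful about, is arranging the carry propagation to work correctly using the end-markers as boundary sentinels without disturbing the $k$ simulating heads; this is routine bookkeeping, but must be verified to go through in a constant number of $M'$-steps per simulated $M$-step so that the overall step count used for the halting bound is still linear in the counter value.
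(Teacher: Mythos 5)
Your proposal is correct and follows essentially the same approach as the paper: bound the number of configurations (the paper uses the cruder bound $n^k$, you use the more careful $\abs{Q}\cdot(n+2)^k$) and use the $k$ additional heads as a base-$(n+2)$ odometer that forces rejection on overflow. The only quibble is that a carry sweep takes $O(n)$ rather than constantly many steps of $M'$, but this does not affect the halting argument.
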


\begin{proof}
	A $k$-headed automaton running on an input $w$ of length $n$
	has $n^k$ distinct configurations.
	Additional $k$ heads can count up to $n^k = (nnn\dots n)_n$,
	and halt the machine with a rejection.
\end{proof}

\begin{lemma}\label{lem:nomovebeyond}
	Every \tnfa{k} can be converted into an equivalent \tnfa{k}
	which does not move its heads beyond the end markers.
\end{lemma}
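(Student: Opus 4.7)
The plan is to construct $M'$ from $M$ by filtering its transition function so that no retained transition attempts to move any head past an endmarker, while keeping every other component unchanged. Concretely, I would set $\delta'(q, (a_1, \dotsc, a_k))$ to consist of those $(q', (d_1, \dotsc, d_k)) \in \delta(q, (a_1, \dotsc, a_k))$ with $d_i \ne -1$ whenever $a_i = \lend$ and with $d_i \ne +1$ whenever $a_i = \rend$. Since the tape symbols appearing in the transition's input already reveal which heads currently sit on $\lend$ or $\rend$, this filter is well-defined, and by construction no head of $M'$ ever leaves the region spanned by $\lend w \rend$.

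To show $\L{M} = \L{M'}$, I would argue each containment separately. The inclusion $\L{M'} \subseteq \L{M}$ is immediate, since $\delta'$ is pointwise contained in $\delta$: every computation branch of $M'$ is also a branch of $M$, so any accepting branch of $M'$ witnesses acceptance in $M$. For the reverse inclusion, I would take an accepting branch of $M$ on $w$ and show that it uses only transitions retained in $\delta'$. The key observation is that a transition in $M$ that moves some head past an endmarker would produce a configuration in which that head reads a symbol outside $\Gamma$; no such transition is defined, so by the $\abs{\delta} = 0$ rule of \Cref{def:tnfa} the branch rejects. An accepting branch therefore cannot take such a transition, and so the very same sequence of nondeterministic choices is still available in $M'$.

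The main subtlety is the semantics of what it means to ``move past an endmarker'' in this model, which the definition does not spell out explicitly. The cleanest resolution is the one used above, treating any such step as an immediate rejection for lack of a defined next transition; with this convention the filtering is safe. An equally valid alternative construction would replace each offending $d_i$ by $0$ in $\delta'$ instead of discarding the entry outright, which works under any convention for beyond-endmarker behavior, but the short filtering argument above is enough for the statement as given.
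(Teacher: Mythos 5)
Your overall approach is the same as the paper's: the paper offers no details beyond ``trivial modifications on the transition function,'' and your filtering of $\delta$ is exactly the kind of modification intended, with the equivalence argument spelled out.

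There is, however, one small but genuine hole in the filtering version as you argue it. You claim that an accepting branch of $M$ cannot take a transition that moves a head past an endmarker, because the resulting undefined read forces $\abs{\delta}=0$ and hence rejection. But under \Cref{def:tnfa} a branch accepts as soon as its state is $q_f$; it never needs a further defined transition. So a branch whose \emph{final} transition simultaneously enters $q_f$ and pushes some head beyond $\lend$ or $\rend$ is an accepting branch of $M$, yet your $\delta'$ discards exactly that transition, and nothing guarantees another accepting branch survives. The fix is the alternative you already mention in passing: instead of deleting the offending tuple, replace each $d_i=-1$ on $\lend$ (resp.\ $d_i=+1$ on $\rend$) by $0$. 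That construction preserves a step-by-step bijection between branches of $M$ and $M'$ in which corresponding configurations agree on the state and on all in-bounds head positions (an out-of-bounds head on an endmarker reads the same symbol it would ``read'' beyond it under any reasonable convention, and can never re-enter the tape interior without first crossing back over the endmarker, where the two machines resynchronize --- or more simply, one checks by induction that a head of $M$ parked on an endmarker and a head of $M'$ clamped there see the same reads forever after). The clamping variant should therefore be promoted from an aside to the primary construction; with it, the lemma holds without any case analysis on how acceptance interacts with out-of-bounds moves.
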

This is done via trivial modifications on the transition function.

\begin{definition}[Multi-head deterministic finite automata]
	A \tdfa{k} is a \tnfa{k} that is restricted to satisfy
	$\abs{\delta} \le 1$, where $\delta$ is its transition function.
\end{definition}

\begin{lemma}\label{lem:knfaisnl}
	Following are shown in~\cite{hartmanis_non-determinancy_1972}:
	\begin{align}
		\cup_{k=1}^\infty \L{\tnfa{k}} &= \NL\\
		\cup_{k=1}^\infty \L{\tdfa{k}} &= \CC{L}
	\end{align}
\end{lemma}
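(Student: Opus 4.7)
The plan is to establish both equalities by proving two inclusions each, and to treat the deterministic statement in complete parallel with the nondeterministic one by restricting throughout to $\abs{\delta} \le 1$ on the automaton side and to deterministic transitions on the Turing machine side.

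For $\bigcup_{k=1}^\infty \L{\tnfa{k}} \subseteq \NL$, given a \tnfa{k} $M$, I would build a nondeterministic Turing machine $N$ that keeps on its work tape the current state of $M$ together with the $k$ head positions, each stored as an integer in $\{0, \dotsc, n+1\}$ using $\lceil \log(n+2) \rceil$ bits. To simulate one step, $N$ moves its input head to each stored position in turn to read the underlying symbol, then nondeterministically picks a transition of $\delta$ and updates the stored state and head positions. The workspace used is $O(k \log n) = O(\log n)$, so $\L{M} \in \NL$.

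For $\NL \subseteq \bigcup_{k=1}^\infty \L{\tnfa{k}}$, I would encode each configuration of a given nondeterministic logspace machine $N$ using a constant number of head positions on the input tape of length $n+2$. Since any such position carries $\log(n+2)$ bits of information, partitioning $N$'s $c \log n$-bit work tape into $c$ blocks of $\lceil \log(n+2) \rceil$ bits allows each block to be stored in the position of one head; $N$'s internal state and the location of its work-tape head are tracked in the finite control and by one further head, respectively. A fixed number of additional heads serve as scratch registers for the primitives needed to simulate one step of $N$: incrementing or decrementing a counter, comparing two positions, and extracting the bit at a specified offset of the counter that represents the currently scanned block, the last of these done by binary search with a few bracketing heads. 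Nondeterministic branching of the resulting \tnfa{k} mirrors that of $N$, and \Cref{lem:timelimit} secures termination on rejecting inputs.

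The main technical obstacle lies in this second construction: one must verify that every needed register primitive — arithmetic on block-indexing counters and bit extraction from a head position — can be realised using only a constant number of auxiliary heads, independent of $n$ and of the size of the simulated workspace. This amounts to the standard observation that a multi-head finite automaton can simulate a Minsky-style counter machine whose counters are bounded by the input length, after which the higher-level assembly of the simulation is essentially bookkeeping and transcribes the deterministic variant into \tdfa{k} verbatim.
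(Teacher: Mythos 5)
The paper does not actually prove this lemma; it is stated purely as a citation to Hartmanis, so there is no internal proof to compare against. Your sketch reconstructs the standard simulation argument from that source, and both directions are sound in outline: the easy inclusion stores the state and the $k$ head positions in $O(k\log n)$ workspace, and the converse packs the $O(\log n)$-bit work tape into a constant number of input-head positions and realises read, write, and compare as counter arithmetic with a constant pool of reusable scratch heads. Two details deserve tightening. First, a block of $\lceil\log(n+2)\rceil$ bits can encode values up to roughly $2(n+2)$, which need not fit in a single head position drawn from $\{0,\dotsc,n+1\}$; taking blocks of $\lfloor\log(n+2)\rfloor$ bits (still only a constant number of them) keeps every block value, and every power $2^j$ arising when a bit is set, within the tape. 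Second, you should say explicitly how the offset of the simulated work-tape head within its current block is maintained and compared against the block length; both are unary quantities bounded by $n$, so one head holding the offset together with a once-computed head holding $\lfloor\log(n+2)\rfloor$ suffices, with the block index kept in the finite control. With those points made explicit the argument, including its verbatim deterministic transcription for $\CC{L}$, is complete; the appeal to \Cref{lem:timelimit} is harmless but not needed, since under the paper's acceptance convention a looping branch never accepts.
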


\begin{definition}[Multi-head one-way finite automata]
	A \tonfa{k} is a restricted \tnfa{k} that does
	not move its heads backwards on the tape.
	In its definition, $\Delta = \Set{0, 1}$.
	A \todfa{k} is similarly a restriction of \tdfa{k}.
\end{definition}

\begin{definition}[Multi-head probabilistic finite automata]
	A \tpfa{k} $M$ is a PTM defined similar to a \tnfa{k}
	with the following modifications on \cref{def:tnfa}:
	\begin{enumerate}
		\item[1.${}'$]
			$Q = Q_D \cup Q_P$, where $Q_D$ and $Q_P$ are disjoint.
		\item[3.${}'$]
			Transition function $\delta$ is overloaded as follows:
			\begin{itemize}
				\item
					$\delta \colon Q_D \times \Gamma^{k} \to
					\P(\wo{Q}{q_0} \times \Delta^{k})$
				\item
					$\delta \colon Q_P \times \Gamma^{k}
					\times \Set{0,1} \to
					\P(\wo{Q}{q_0} \times \Delta^{k})$
			\end{itemize}
		The output of $\delta$ may at most have 1 element.
	\end{enumerate}
\end{definition}

States $Q_D$ are called deterministic,
and $Q_P$ probabilistic.
Depending on the state of the machine,
$\delta$ receives a third parameter,
where a 0 or 1 is provided by
a random bit-stream.

A string $w$ is in $\L{M}$,
if and only if $M$ accepts $w$ with a probability greater than $\sfrac12$.

Due to the probabilistic nature of a given \tpfa{k} $M$,
following three measures of error in the language
recognition are inherent to it:
\begin{align}
	\tag{False rejection}
	\failaerr(M) &= \Prob{M \text{ does not accept } w \given w \in \L{M}}\\
	\tag{Failure to reject}
	\failrerr(M) &= \Prob{M \text{ does not reject } w \given w \notin \L{M}}\\
	\tag{False acceptance}
	\faerr(M) &= \Prob{M \text{ accepts } w \given w \notin \L{M}}
\end{align}

Note that when a \tpfa{k} $M$ does not reject a string $w$,
then it could have either accepted it,
or wound up in an infinite loop.
Consequently, $\failrerr \ge \faerr$ is always true.
Based on this fact, the overall weak and strong
errors of a probabilistic machine $M$ is defined as follows:
\begin{align}
	\tag{Weak error}
	\werr(M) &= \max(\failaerr(M), \faerr(M))\\
	\tag{Strong error}
	\serr(M) &= \max(\failaerr(M), \failrerr(M))
\end{align}

Given a $k$ and $\err < \sfrac12$, let
\begin{equation*}
	\Lw{\err}{\tpfa{k}} = \Set{ \L{M} | M \in \tpfa{k}, \werr(M) \le \err }
\end{equation*}
be the class of languages recognized
by a \tpfa{k} with a weak error at most $\err$.
Class $\Ls{\err}{\tpfa{k}}$ is defined similarly.

\section{Interactive Proof Systems}\label{sec:ips}

An interactive proof system (IPS) models the verification
process of proofs.
Of the two components in an IPS, the \emph{prover}
produces the purported proof of membership for a given input string,
while the \emph{verifier}
either accepts or rejects the string, alongside its proof.
The catch is that the prover is assumed to advocate for the
input string's membership without regards to truth,
and the verifier is expected to be accurate in its decision,
holding a healthy level of skepticism against the proof.

The verifier is any Turing machine with capabilities
to interact with the prover via a shared communication cell.
The prover can be seen as an infinite state transducer
that has access to both
an original copy of the input string
and the communication cell.
Prover never halts, and its output is to the communication cell.

Our focus will be on the one-way IPS,
which restricts the interaction to be a monologue
from the prover to the verifier.
Since there is no influx of information to the prover,
prover's output will be dependent on the input string only.
Consequently, a one-way IPS can also be modeled
as a verifier paired with a certificate function,
$c \colon \Sigma^* \to \Lambda^\infty$,
where $\Lambda$ is the communication alphabet.
A formal definition follows:

\begin{definition}[One-way interactive proof systems]
	An \owips{\specs} is defined with a tuple of
	a verifier and a certificate function, $S = (V, c)$.
	The verifier $V$ is a Turing machine
	of type specified by the \specs{}.
	The certificate function $c$
	outputs the claimed proof of membership
	$c(w) \in \Lambda^\infty$ for a given input string $w$.
\end{definition}

The verifier's access to the certificate is only
in the forward direction.
The qualifier ``one-way'', however,
specifies that the interaction in the IPS
is a monologue from the prover to the verifier,
not the aforementioned fact, which is true for all IPS.

The language recognized by $S$ can be denoted with $\L{S}$,
as well as $\L{V}$.
A string $w$ is in $\L{S}$,
if and only if the interaction results in an acceptance of $w$ by $V$.

If the verifier of the IPS is probabilistic,
its error becomes the error of the IPS.
The notation $\Lw{\err}{\owips{\specs}}$ and
$\Ls{\err}{\owips{\specs}}$ is also adopted.

Say and Yakaryılmaz proved that~\cite{say_finite_2014}:
\begin{align}
	\NL &\subseteq \Lw{\err}{\owips{\tpfa{1}, \constr}}
		&& \text{for $\err > 0$ arbitrarily small,}
		\label{eq:weak}\\
	\NL &\subseteq \Ls{\err}{\owips{\tpfa{1}, \constr}}
		&& \text{for $\err = \frac12 - \frac1{2k^2}$, $k\to\infty$.}
		\label{eq:strong}
\end{align}

For the latter proposition, the research proves that
any language $L \in \NL$
can be recognized by a one-way IPS
$S \in \owips{\tpfa{1}, \constr}$,
which satisfies $\serr(S) \le \sfrac12 - \sfrac1{2k}$,
and where $k$ is the minimum number of heads
among the $\tnfa{k}$ recognizing $L$
that also halts on every input.
Existence of such a \tnfa{k} is guaranteed by
\cref{lem:knfaisnl,lem:timelimit}.

This work improves on the findings of~\cite{say_finite_2014}.
For their pertinence, an outline of the algorithms
attaining the errors in \cref{eq:weak,eq:strong} is provided
in the following sections.

\subsection{Reducing weak error arbitrarily using constant-randomness verifier}

Given a language $L \in \NL$ with a halting \tnfa{k} recognizer $M$,
verifier $V_1 \in \tpfa{1}$ expects a certificate to report
\begin{enumerate*}[label=(\roman*)]
	\item
		the $k$ symbols read, and
	\item
		the nondeterministic branch taken
\end{enumerate*}
for each transition made by $M$ on the course of accepting $w$.
Such a report necessarily contains a lie, if $w \notin \L{M} = L$.

Verifier $V_1$ has an internal representation of $M$'s control.
Then, the algorithm for the verifier is as follows:
\begin{enumerate}
	\item
		Repeat $\C$ times:
		\begin{enumerate}
			\item
				Move head left, until $\lend$ is read.
			\item
				Reset $M$'s state in the internal representation, denoted $q_m$.
			\item\label{itm:randchoice}
				Randomly choose a head of $M$ by flipping $\ceil{\log k}$ coins.
			\item
				Repeat until $q_m$ becomes the accepting state of $M$:
				\begin{enumerate}
					\item
						Read $k$ symbols and the nondeterministic
						branch taken by $M$ from the certificate.
					\item
						\Reject{} if the reading from $V_1$'s head disagrees with
						the corresponding symbol on the certificate.
					\item
						Make the transition in the internal representation
						if it is valid, and move the chosen head
						as dictated by the nondeterministic branch.
						\Reject{} otherwise.
				\end{enumerate}
		\end{enumerate}
	\item
		\Accept.
\end{enumerate}

For the worst case errors,
it is assumed that there is a lie for the certificate to tell
about each one of the heads alone
and in any single one of the transitions,
which causes $V_1$ to fail to reject a string $w \notin L$.
Similar lies are assumed to exist for the false acceptances.
Following are then the (upper bounds of) errors for $V_1$:
\begin{align*}
	\failaerr(V_1) &= 0 &
	\failrerr(V_1) &\le \frac{k-1}{k} &
	\faerr(V_1) &\le \frac{1}{k^m}
\end{align*}

A discrepancy between $\faerr$ and $\failrerr$
is observed, because an adversarial certificate
may wind $V_1$ up in an infinite loop on its first round
of $m$ repetitions.
This is possible despite $M$ being a halting machine.
The lie in the certificate can present an infinite
and even changing input string from the perspective
of the head being lied about.

Being wound up counts as a failure to reject,
but does not yield a false acceptance.
The resulting weak error is $\serr = k^{-m}$,
which can be made arbitrarily small.

\subsection{Bringing strong error below \sfrac12 using constant-randomness
verifier}

Presented first in~\cite{say_finite_2014},
verifier $V_1'$ with the following algorithm manages to
achieve $\serr(V_1') < \sfrac12$,
outlined as follows:
\begin{enumerate}
	\item
		Randomly \reject{} with $\sfrac{k-1}{2k}$ probability
		by flipping $\ceil{\log k} + 1$ coins.
	\item
		Continue as $V_1$.
\end{enumerate}

This algorithm then has the following upper bounds for the errors:
\begin{align*}
	\failaerr(V_1') &= \frac{k-1}{2k} &
	\failrerr(V_1') &\le \frac{k^2-1}{2k^2} &
	\faerr(V_1') &\le \frac{k+1}{2k^{m+1}}
\end{align*}

Since $\failrerr(V_1')$ is potentially greater than $\failaerr(V_1')$,
the strong error is bounded by $\sfrac{k^2-1}{2k^2}$.

\section{Windable heads}\label{sec:wind}

This section will introduce a property of the heads of a \tnfa{k}.
It leads to a characterization of the \tnfa{k}
by the number of heads with this property.
A subset \rNL{} of the class \NL{} will be defined,
which will also be a subset of $\Ls{\err}{\owips{\tpfa{1}, \constr}}$
for $\err > 0$ approaching zero.

A head of a \tnfa{k} $M$ is said to be \mydefn{windable}
if these three conditions hold:
\begin{itemize}
	\item
		There is a cycle on the graph of $M$'s transition diagram,
		and a path from $q_0$ to a node on the cycle.
	\item
		The movements of the head-in-question add up to zero
		in a full round of that cycle.
	\item
		The readings of the head is consistent
		along the said path and cycle.
\end{itemize}
The definition of a head being windable
completely disregards the readings of the other heads,
hence the witness path and the cycle need not be a part of
a realistic execution of the machine $M$.

We will define the windable heads formally
to clarify its distinguishing points.
Some preliminary definitions will be needed.

\begin{definition}[Multi-step transition function]
	\label{defn:multistepdelta}
	\begin{equation*}
		\delta^\D \colon Q \times (\Gamma^\D)^k \to
		\P\left(\wo{Q}{q_0} \times (\Delta^\D)^k\right)
	\end{equation*}
	is the $\D$-step extension of the
	transition function $\delta$
	of a \tnfa{k} $M$.
	It is defined recursively, as follows:
	\begin{align*}
		\delta^1 &= \delta\\
		\delta^\D(q, g_1, \dotsc, g_k) &=
		\Set{
			(r, D_1 \ext d_1, \dotsc, D_k \ext d_k) |
			\begin{aligned}
				(r, d_1, \dotsc, d_k) &\in
				\delta(s, {g_1}_\D, \dotsc, {g_k}_\D) 
				\\ 
				(s, D_1, \dotsc, D_k) &\in
				\delta^{\D-1}(q, \trim{g_1}, \dotsc, \trim{g_k})
			\end{aligned}
		}
	\end{align*}
\end{definition}

The set $\delta^\D(q, g_1, \dotsc, g_k)$ contains
a $(k+1)$-tuple for each nondeterministic computation to be performed by $M$,
as it starts from the state $q$ and reads $g_i$ with its $i$th head.
These tuples, each referred to as a \mydefn{computation log},
consist of the state reached,
and the movement histories of the $k$ heads during
that computation.

The constraint of a constant and persistent tape contents
that is present in an execution of a \tnfa{k} is blurred
in the definition for multi-step transition function.
This closely resembles the verifier's perspective
of the remaining heads that it does not verify
in the previous section. There, however,
the verifier's readings were consistent in itself.
This slight will be accounted for with the next pair
of definitions.

\begin{definition}[Relative head position during $i$th transition]
	\label{defn:headpos}
	The relative position of the head since the before
	the first movement in the movement history $D$ of length $t$
	and while making the $i$th transition of that history
	is given by the function
	$\POS_D(i) \colon \No^{\le \D} \to \openrange{-\D}{\D}$
	defined as
	$$
	\POS_D(i) = \SUM(D_{1 : i-1}).
	$$
\end{definition}

If $D$ is a movement history
from a computation that does not
attempt to move the head out of tape's bounds,
then $\POS_D(i)$ is the position of the head
while making the $i$th transition,
relative to the position where the head was
at the beginning of that computation.

\begin{definition}[1-head consistent $\delta^\D$]\label{defn:dk1}
	$\delta^\D_1 \colon Q \times (\Gamma^\D)^k \to
	\P\left(\wo{Q}{q_0} \times (\Delta^\D)^k\right)$ is the
	\mydefn{$i$th-head consistent} subset of
	$\delta^\D$ of a \tnfa{k} $M$.
	It filters out the first-head inconsistent computation logs by
	scrutinizing the purportedly read characters by examining
	the movement histories against the readings.
	The formal definition assumes that $M$ does not attempt to move
	its heads beyond the end markers, and is as follows:
	\begin{multline*}
		\delta^\D_1(q, g_1, \dotsc, g_k) = \left\{\:
			(r, D_1, \dotsc, D_k) \in \delta^\D(q, g_1, \dotsc, g_k)
			\:\right|\\
			\left.
			\forall p \in \openrange{-\D}{\D},\;
			\forall x, y \in \POS^{-1}_{D_i}(p)\;
			\BR{ {g_i}_x = {g_i}_y }
		\:\right\}
	\end{multline*}
\end{definition}

For each pair of transitions departing from the same tape cell,
it is checked whether the same symbol is read while being performed.
This check is needed to be done only for $p \in \openrange{-\D}{\D}$,
since in $\D$ steps,
a head may at most travel $\D$ cells afar,
and the last cell it can read from will then be the previous one.
This is also consistent with the definition of $\POS_D$.

This last definition is the exact analogue of the
verifiers' perspective in the algorithms proposed by~\cite{say_finite_2014}.
It can be used directly in our next definition,
that will lead us to a characterization of the \tnfa{k}.

\begin{definition}[Windable heads]\label{defn:wind}
	The $i$th head of a \tnfa{k} $M$ is \mydefn{windable}
	\textiff{} there exists;
	\begin{enumerate}[topsep=3pt, itemsep=1pt]
		\item
			$g_1, \dotsc, g_k \in \Gamma^\D$ and
			$g_1', \dotsc, g_k' \in \Gamma^l$,
			for $\D$ and $l$ positive,
		\item\label{itm:winddelta}
			$(q, D_1, \dotsc, D_k) \in \delta^\D_i(q_0, g_1, \dotsc, g_k)$,
		\item
			$(q, D_1 \ext D_1', \dotsc, D_k \ext D_k') \in
			\delta^{\D+l}_i(q_0, g_1 \ext g_1', \dotsc, g_k \ext g_k')$
			where $\SUM(D'_i) = 0$.
	\end{enumerate}
\end{definition}

When these conditions hold,
$g_1, \dotsc, g_k$ can be viewed as the sequences of characters
that can be fed to $\delta$ to bring $M$
from $q_0$ to $q$,
crucially without breaking consistency among the $i$th
head's readings.
This ensures reachability to state $q$.
Then, the sequences $g_1', \dotsc, g_k'$
wind the $i$th head into a loop;
bringing $M$ back to state $q$ and
the first head back to where it started the loop,
all while keeping the $i$th head's readings consistent.
The readings from the other heads are allowed to be inconsistent,
and their position can change with every such loop.

A head is \mydefn{reliable} \textiff{} the head is not windable.

It is important to note that a winding is not based on a
realistic execution of a \tnfa{k}.
A head of a \tnfa{k} $M$ might be windable,
even if it is guaranteed to halt on every input.
This is because the property of being windable
allows other heads to have \emph{unrealistic},
inconsistent readings that may be never realized with any input string.

\section{
	Recognizing some languages in \NL{} with
	constant-randomness and reducible-error verifiers
}\label{sec:alg}

Consider a language $L \in \NL$
with a \tnfa{k} recognizer $M$ that halts on every input.
In designing the randomness-restricted \tpfa1 verifier $V_2$,
following three cases will be considered:

\paragraph{All heads are reliable.}
In this case, $V_1$ suffices by itself
to attain reducible error.
Without any windable heads in the underlying \tnfa{k},
each round of $V_1$ will terminate.
The certificate can only make $V_1$ falsely accept,
and the chances for that can be reduced arbitrarily
by increasing $\C$.

\paragraph{All heads are windable.}
In this case, unless the worst-case assumptions are alleviated,
any verification algorithm using a simulation principle
similar to $V_1$ will be wound up on the first round.
The head with the minimum probability of getting chosen
will be the weakest link of $V_2$,
thus the head the certificate will be lying about.
The failure to reject rate is equal 1 minus that probability.
This rate is the lowest when the probabilities are equal,
and is then $\sfrac{k-1}k$.

\paragraph{It is a mix.}
Let $\kw, \kr$ denote the windable and reliable head counts,
respectively. Thus $\kw + \kr = k$.
The new verifier algorithm $V_2$ is similar to $V_1$,
but instead of choosing a head to simulate with equal probability,
it will do a \mydefn{biased branching}.
With biased branching, $V_2$ favors the
reliable heads over the windable heads while
choosing a head to verify.

Let $\opw, \opr$ denote the desired probability of choosing
a windable and reliable head, respectively.
Note that $\opw + \opr = 1$.
The probabilities of choosing a head within types
(windable or reliable) are kept equal.
Denote the probability of choosing a particular windable
head as $\pw = \sfrac{\opw}{\kw}$, and similarly
$p_r = \sfrac{\opr}{\kr}$.
Assume $\opw, \opr$ are finitely representable in binary,
and with $\B$ digits after the decimal point.
Then, the algorithm of $V_2$ is the same as $V_1$,
with the only difference at step~\ref{itm:randchoice}:
\begin{enumerate}
	\item[1c.${}'$]
		Randomly choose a head of $M$ by biased branching:
		\begin{itemize}
			\item
				Instead of flipping $\ceil{\log k}$ coins,
				flip $\B + \ceil{\log(\max(\kw, \kr))}$ coins.
				Let $z_1, z_2, \dotsc, z_{\B}$ be the outcomes of
				the first $\B$ coins.
			\item
				If $\sum_{i=1}^\B 2^{-i}z_i < \opw$,
				choose one of the windable heads
				depending on the outcomes of the next
				$\ceil{\log \kw}$ coins.
				Otherwise, similarly choose a reliable head
				via $\ceil{\log \kr}$ coins.
		\end{itemize}
\end{enumerate}

\paragraph{For an input string $w \in L$.}
Verifier $V_2$ is still perfectly accurate.
Certificate may provide any route that leads $M$ to acceptance.
Repeating this for $\C$-many times,
$V_2$ will accept after $\C$ rounds of validation.

\paragraph{For an input string $w \notin L$.}
To keep $V_2$ from rejecting,
the certificate will need to lie about
at least one of the heads.
Switching the head to lie about in between rounds
cannot be of any benefit to the certificate on its mission,
since the rounds are identical
both from $V_2$'s and the certificate's points of view.
Hence, it is reasonable to assume that
the certificate repeats itself in each round,
and simplify our analysis.

The worst-case assumption is that
the certificate can lie about a single (arbitrary) head alone
and deceive $V_2$ in the worst means possible,
depending on the head it chooses:
\begin{itemize}
	\item
		If it chooses the head being lied about,
		$V_2$ detects the lie rather than being deceived.
	\item
		Otherwise, if a windable head was chosen,
		$V_2$ loops indefinitely.
	\item
		Otherwise (i.e. a reliable head was chosen),
		$V_2$ runs for another round or accepts $w$.
\end{itemize}

The head which the certificate fixes to lie about is
either a windable head or a reliable one.
Given a $V_2$ algorithm with its parameter $\opw$ set,
let $\FRw(\opr)$ be the probability of $V_2$ failing to reject
against a certificate that lies about any one windable head.
Let $\FRr(\opr)$ similarly be the probability
for the reliable counterpart.

The most evil certificate would lie about the head that yields a
higher error. Thus, the worst-case failure to reject probability
is given by
\begin{equation*}
	\FR(\opr) = \max(\FRw(\opr), \FRr(\opr)).
\end{equation*}
Individually, $\FRw(\opr)$ and $\FRr(\opr)$
are calculated using the following formulae:

\begin{align*}
	\FRw(\opr) &= \sum_{i=0}^{\C-1} \opr^i (\opw-\pw) + \opr^{\C} \\
	&= (1 - \opr^{\C}) \cdot \left(1 - \frac1{\kw}\right) + \opr^{\C} \\
	&= 1 - \frac{1 - \opr^{\C}}{\kw}
\end{align*}
\begin{align*}
	\FRr(\opr) &= \sum_{i=0}^{\C-1} (\opr-\pr)^i \opw + (\opr-\pr)^{\C} \\
	&= \frac{1 - (\opr-\pr)^{\C}}{1 - (\opr-\pr)} \cdot \opw + (\opr-\pr)^{\C} \\
	&= \frac{\opw}{\opw+\pr} + \left(1-\frac{\opw}{\opw+\pr}\right)(\opr-\pr)^{\C}\\
\end{align*}

The objective is to find the optimum $\opr$,
denoted $\Propt$, minimizing the error $\FR(\opr)$.
We note that $\FR(1)$ is $1$.
Hence, $\Propt < 1$.

Constant $\C$ may be chosen arbitrarily large.
For $\opr < 1$, and $\C$ very large,
approximations of $\FRw$ and $\FRr$ are, respectively,
given as
\begin{align*}
	\FRw^*(\opr) &= 1 - \frac{1}{\kw} &
	\FRr^*(\opr) &= \frac{\opw}{\opw+\pr}.
\end{align*}

Error $\FRw^*$ is a constant between 0 and 1.
For $0 \le \opr \le 1$,
error $\FRr^*$ decreases from 1 to 0,
and in a strictly monotonous fashion:
\begin{equation*}
	\frac{\dd{\FRr^*}}{\dd{\opr}} =
	\frac{-\pr - \sfrac{\opw}{\kw}}{(\opw+\pr)^2} < 0
\end{equation*}
These indicate that $\FRw^*(\opr)$ and $\FRr^*(\opr)$ are equal
for a unique $\opr = \Propt$.
The optimality of $\Propt$ will be proved shortly.
It is easy to verify that
\begin{equation}
	\Propt = \frac{\kr}{k-1}.\label{eq:propt}
\end{equation}

Using $\Propt$ we can define $\FR^*$ as the following
partial function:
\begin{equation*}
	\FR^*(\opr) = \begin{cases}
		\FRr^*(\opr) & \text{for } \opr \le \Propt\\
		\FRw^*(\opr) & \text{for } \opr \ge \Propt
	\end{cases}
\end{equation*}

Since $\FRr^*$ is a decreasing function,
$\FR(\opr) > \FR(\Propt)$ for any $\opr < \Propt$.
The approximation $\FRw^*$ is a constant function.
Function $\FRw$, however, is actually an increasing one.
Therefore, given $\C$ large,
probability $\Propt$ approximates the optimum
for $V_2$ choosing a reliable head among the $k$
heads of the $M$,
while verifying for the language $\L{M} \in \NL$.
Consequently the optimum error for $V_2$ is
\begin{equation}
	\FR(\Propt) = 1 - \frac1{\kw}.\label{eq:fropt}
\end{equation}

This points to some important facts.

\begin{theorem}\label{lem:kwdefineserr}
	The minimum error for $V_2$ depends only on the number
	of windable heads of the \tnfa{k} $M$ recognizing $L \in \NL$.
\end{theorem}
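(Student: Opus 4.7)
The plan is to read off the theorem as a direct consequence of equation~\eqref{eq:fropt}: the preceding development has already established $\FR(\Propt) = 1 - \sfrac{1}{\kw}$, and the right-hand side mentions only $\kw$, with no appearance of $\kr$ or $k$. What remains is to justify that this expression is genuinely the minimum error attainable by $V_2$, so that the dependence on $\kw$ alone is a statement about the minimum rather than merely about one particular choice of parameters.

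First I would recall the structure of $\FR(\opr) = \max(\FRw(\opr), \FRr(\opr))$. The text has already shown that $\FRr^*$ is strictly decreasing in $\opr$ over $[0, 1]$ while $\FRw^*$ is constant at $1 - \sfrac{1}{\kw}$, and that the true $\FRw$ is actually increasing in $\opr$. Hence the maximum of the two curves is minimized at their common value, $\opr = \Propt = \sfrac{\kr}{k-1}$, and for any $\opr \ne \Propt$ the worst-case error is strictly larger (strictly, in the $\C \to \infty$ limit of the approximations). This pins down $\Propt$ as the optimal choice of branching-bias parameter.

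Next I would address what \emph{minimum error} means here: since $\FRw \ge \FRw^*$ and the gap tends to zero as $\C \to \infty$, the quantity $1 - \sfrac{1}{\kw}$ is the infimum of $\FR(\Propt)$ over the free parameter $\C$, approached arbitrarily closely by enlarging $\C$. This infimum is what the theorem refers to as the minimum error for $V_2$. Substituting $\Propt$ into the formula for $\FRw$ and taking $\C \to \infty$ reproduces $1 - \sfrac{1}{\kw}$, as desired.

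I do not anticipate a real obstacle; the content of the theorem is essentially a reading of equation~\eqref{eq:fropt}, and the only mildly delicate point is being precise that the approximate analysis (the $\FRw^*$ and $\FRr^*$ used in locating $\Propt$) does converge to the exact error in the $\C \to \infty$ limit, so that the $\kr$-dependence visible in $\Propt$ itself washes out of the final error expression. Once this is acknowledged, the claim that the minimum depends only on $\kw$ follows immediately.
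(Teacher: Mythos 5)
Your proposal is correct and follows essentially the same route as the paper: the paper offers no separate proof, presenting the theorem as an immediate consequence of the preceding derivation that locates the optimum at $\Propt = \sfrac{\kr}{k-1}$ (where the decreasing $\FRr^*$ meets the constant $\FRw^*$) and evaluates $\FR(\Propt) = 1 - \sfrac{1}{\kw}$, which involves only $\kw$. Your additional care in treating $1-\sfrac{1}{\kw}$ as the infimum over $\C$ of the exact (rather than approximate) error is a welcome clarification but not a departure from the paper's argument.
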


\begin{definition}[Reducible strong error subset of \NL{}]
	For $\err > 0$ approaching zero, 
	the reducible strong error subset of \NL{} is defined as
	\begin{equation*}
		\rNL{} = \NL{} \cap \Ls{\err}{\owips{\tpfa{1}, \constr}}.
	\end{equation*}
\end{definition}

\begin{theorem}\label{lem:kwzeroorone}
	For $\kw \le 1$ and $\kw$ arbitrary,
	\begin{equation*}
		\L{\tnfa{\kw+\kr}} \subseteq \rNL{}.
	\end{equation*}
\end{theorem}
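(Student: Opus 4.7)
The plan is to split on $\kw \in \Set{0, 1}$ and apply the appropriate verifier from this section for each case.

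For $\kw = 0$, I would show that $V_1$ alone recognizes $L$ with reducible strong error. The crux is to argue that $V_1$'s simulation cannot loop against any certificate. Suppose it did, on some certificate and some choice of head $h$. By \cref{lem:nomovebeyond}, $h$ stays within a finite tape, so by pigeonhole on the finite set of (state, position of $h$) pairs, the simulation eventually revisits some pair. The path from $q_0$ to this pair witnesses reachability, and the cycle returning to it has zero net movement of $h$ with consistent readings of $h$---since $V_1$ reads $h$'s symbols from the real tape. Per \cref{defn:wind}, $h$ would then be windable, contradicting $\kw = 0$. Hence each round of $V_1$ halts, giving $\failrerr(V_1) = \faerr(V_1) \leq k^{-\C}$, and combined with $\failaerr(V_1) = 0$ the strong error $\serr(V_1) \leq k^{-\C}$ can be made arbitrarily small by increasing $\C$.

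For $\kw = 1$, I would apply $V_2$, but deliberately avoid the nominal optimum $\Propt = \sfrac{\kr}{k-1} = 1$ from \cref{eq:propt}, which is degenerate: it prescribes $\opw = 0$, so the windable head would never be verified and $\FR(1) = 1$. Instead, for a target $\err > 0$, I would pick a small positive $\opw$---rounded to a finite binary expansion with some precision $\B$---and then make $\C$ large. With $\kw = 1$ one has $\pw = \opw$, so $\FRw(\opr) = \opr^\C$, decaying to $0$ as $\C \to \infty$ for any $\opr < 1$. On the reliable side, $\pr = \sfrac{1-\opw}{\kr}$ stays bounded below by a positive constant for small $\opw$, so the static term $\sfrac{\opw}{\opw+\pr}$ in $\FRr$ shrinks linearly with $\opw$. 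I would first fix $\opw$ small enough that $\sfrac{\opw}{\opw+\pr} \leq \sfrac{\err}{2}$, and then fix $\C$ large enough that both $\opr^\C$ and $(\opr-\pr)^\C$ are at most $\sfrac{\err}{2}$. Then $\FR(\opr) \leq \err$ and hence $\serr(V_2) \leq \err$.

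The delicate step is the $\kw = 1$ case, since the clean formula $\FR(\Propt) = 1 - \sfrac{1}{\kw} = 0$ from \cref{eq:fropt} is not achievable by a single valid parameter choice---it is the limit of a two-parameter ($\opw \to 0$ and $\C \to \infty$) process that has to be ordered carefully so that both $\FRw$ and $\FRr$ become simultaneously small. The $\kw = 0$ case, by contrast, is essentially routine once the non-looping property of $V_1$ under $\kw = 0$ is read off against \cref{defn:wind}.
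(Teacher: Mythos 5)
Your proof is correct and follows the same route the paper takes: the paper offers no separate argument for this theorem, presenting it as a direct consequence of the ``all heads are reliable'' discussion (for $\kw=0$) and of \cref{eq:propt,eq:fropt} (for $\kw=1$). You do, however, supply two pieces of rigor that the paper leaves implicit, and both are worth having. First, your pigeonhole argument --- that a non-terminating round of $V_1$ would revisit a (state, head-position) pair and thereby exhibit, via \cref{defn:wind}, a windability witness for the chosen head --- is exactly the missing justification for the paper's bare assertion that every round of $V_1$ terminates when $\kw=0$. Second, you correctly observe that for $\kw=1$ the formula $\Propt=\sfrac{\kr}{k-1}$ evaluates to $1$, which is degenerate (it would set $\opw=0$) and sits uneasily with the paper's own remark that $\Propt<1$; the value $\FR(\Propt)=1-\sfrac{1}{\kw}=0$ of \cref{eq:fropt} is thus only attained as a limit, and your ordering of the two parameters --- first choose $\opw$ small enough that the static term $\sfrac{\opw}{\opw+\pr}$ is below $\sfrac{\err}{2}$, then choose $\C$ large enough to kill both $\opr^{\C}$ and $(\opr-\pr)^{\C}$ --- is the right way to realize an arbitrary target $\err>0$ with a concrete, finitely representable parameter choice. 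Together with $\failaerr(V_2)=0$ this gives $\serr(V_2)\le\err$, so the argument is complete.
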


\Cref{eq:propt,eq:fropt}, and their consequent
\cref{lem:kwdefineserr,lem:kwzeroorone},
constitute the main results of this study.

Similar to how $V_1'$ was obtained,
the algorithm for $V_2'$ is as follows:
\begin{enumerate}
	\item
		Randomly \reject{} with $\sfrac{\kw-1}{2\kw}$ probability
		by flipping $\ceil{\log \kw} + 1$ coins.
	\item
		Continue as $V_2$.
\end{enumerate}
The strong error of $V_2'$ is then given by
$\serr(V_2') \le \sfrac12 - \sfrac1{2\kw}$.

\subsection{Example languages from \rNL{} and potential outsiders}

Let $w_{\ta}$ denote the amount of symbols $\ta$ in a string $w$.

Following two are some
example languages with \tnfa{\kw+\kr} recognizers, where $\kw = 0$:
\begin{align*}
	A_1
	&= \Set{ \ta^n\tb^n\tc^n\td^n | n \ge 0 }\\
	A_2
	&= \Set{ w \in \Set{ \ta, \tb, \tc } | w_{\ta} = w_{\tb} = w_{\tc} }\\
	\intertext{
		An example language with a $\kw \le 1$ recognizer is the following:
	}
	A_3
	&= \Set{ \ta_1\ta_2 \dotsm \ta_n \tsh \ta_1^+\ta_2^+ \dotsm \ta_n^+ | n \ge 0}
	\intertext{
		Lastly, an example language that might be outside \rNL{}
		is follows:
	}
	A_4
	&= \Set{ w \in \Set{ \ta, \tb, \tc } | w_{\ta} \cdot w_{\tb} = w_{\tc} }
\end{align*}

\section{Open Questions}\label{sec:open}

It is curious to us whether
$\L{\tnfa{\kw+\kr}}$
coincides with any known class of languages
for $\kw = 0$ or $1$, or $\kw \le 1$.
The minimum number or windable heads required
for a language in \NL{} to be recognized
by a halting \tnfa{k}, could establish a complexity class.
Conversely, one might be able to discover yet another
infinite hierarchy of languages based on the number of
windable heads.
For some $c > 0$ and $\kw' = \kw + c$,
this hierarchy might be of the form
\begin{equation*}
	\L{\tnfa{k = \kw+\kr}} \subsetneq \L{\tnfa{k' = \kw'+\kr'}}
\end{equation*}
for $k = k'$, $\kr = \kr'$, or without any further restriction.

\begingroup
\raggedright
\bibliographystyle{splncs04}
\bibliography{ref} 
\endgroup

\end{document}